\def\E{{\mathbb{E}}}
\def\P{{\mathbb{P}}}
\def \I{{\mathbb{I}}}
\def \F{{\mathcal{F}}}
\newcommand{\dx}{\operatorname{d}\!}
\newcommand{\bearno}{\begin{eqnarray*}}
\newcommand{\enarno}{\end{eqnarray*}}
 \def\Xi{X^{\infty}}
\def\P{{\mathbb P}}   
\def\E{{\mathbb E}}
\newtheorem{Thm}{Theorem}
\newtheorem{Lemma}{Lemma}
\newtheorem{Prop}{Proposition}
\newtheorem{Rem}{Remark} 
\title{Pricing American Options
	 Time-Capped by a Drawdown Event}
\author{Zbigniew Palmowski}
\address{Faculty of Pure and Applied Mathematics, Wroc\l aw University of Science and Technology, Wroc\l aw, Poland} \email{zbigniew.palmowski@pwr.edu.pl}
\author{Pawe\l\ St\c{e}pniak}
\address{Faculty of Pure and Applied Mathematics, Wroc\l aw University of Science and Technology, Wroc\l aw, Poland} \email{pawel.stepniak@pwr.edu.pl}
\thanks{
This work is partially supported by National Science Centre, Poland, under grants
No. 2021/41/B/HS4/00599.}
\date{\today}
\keywords{}
\begin{document}
\begin{abstract}
This paper presents a derivation of the explicit price for the perpetual American put option in the Black–Scholes model, time-capped by the first drawdown epoch beyond a predefined level. We demonstrate that the optimal exercise strategy involves executing the option when the asset price first falls below a specified threshold. The proof relies on martingale arguments and the fluctuation theory of L\'evy processes. To complement the theoretical findings, we provide numerical analysis.
\vspace{3mm}

\noindent {\sc Keywords.}
Black-Scholes model  $\star$ pricing $\star$ American option $\star$ optimal stopping $\star$ cap

\end{abstract}

\maketitle

\pagestyle{myheadings} \markboth{\sc P.\ St\c{e}pniak --- Z.\ Palmowski} {\sc Time-capped American options}

\vspace{1.8cm}

\tableofcontents

\newpage

\section{Introduction}
\subsection{Main results}

In mathematical finance, there has been growing interest in extending classical American options to include more complex conditions. Among these innovations are capped options, which aim to limit risk by introducing constraints on asset value or maturity time. These instruments appeal to investors due to their reduced liability and lower cost compared to standard options. Popularity of this type of financial instrument strongly depends though on
understanding their pricing, hedging, and optimal exercise policies. The desire to understand these features was our motivation for this paper.

The most common form of capped options involves applying a cap to the asset price, where the option is terminated if the asset value exceeds a specified threshold. One of the simplest examples of capped options are introduced in 1991 by the Chicago Board of Options Exchange European options written on the S\&P 100 and S\&P 500 with a cap on their payoff function (see \cite{cap}).
For other papers related to this type of cap see \cite{Zaevski2, Zaevski2b} and references therein.

Our study focuses on a non-deterministic time cap, where the termination of the option is triggered by the asset price experiencing a drawdown that exceeds a fixed threshold. Here, drawdown refers to the relative drop in the asset price compared to its historical maximum. We derive a closed-form formula for the price of the American put option under this random maturity condition and determine the corresponding optimal exercise strategy.

More formally, consider the asset price $S_t$ in the Black-Scholes market, that is, under the risk-neutral measure,
\begin{equation}\label{assetproce}
S_t=e^{X_t},
\end{equation}
where
\begin{equation}\label{X_def}
	X_t = x + \left( r - \frac{\sigma^2}{2} \right)t + \sigma B_t
\end{equation}
and $B_t$ is Brownian motion defined on a filtered probability space $(\Omega, \F, \{\mathcal{F}_t\}_{\{t\geq 0\}}, \P)$ with a natural filtration $\{\F_t\}_{\{t \geq 0\}}$ of $B_t$ satisfying usual conditions,
r is risk-free interest rate and $\sigma$ is a volatility.
By
\begin{equation}\label{define_overline_S}
	\overline{S}_t = e^{\overline{x}}\vee \sup_{0 \leq u \leq t} S_u
\end{equation}
we denote the running maximum of the asset price where
$e^{\overline{x}}$ is the historical maximum of the underlying asset price prior to the beginning of the contract, where $a\vee b = \max\{a,b\}$. Similarly, we will denote $a\wedge b = \min\{a,b\}$.
For the fixed threshold $c>0$ let
\begin{equation}\label{define_tau_D}
	\tau_D = \inf \left\{ t \geq 0: \frac{\overline{S}_t}{S_t} \geq e^c \right\}
\end{equation}
be the first time when the (relative) drawdown is bigger than $e^c$.
In the main result of this paper we identify the closed-form formula for the price of the American put option with the random maturity determined be a drawdown event given by
\begin{equation}\label{define_V}
	V(x, \overline{x}) = \sup_{\tau \in \mathcal{T}} \E_{x,\overline{x}} \left[e^{-r\tau\wedge\tau_D}(K - S_{\tau\wedge\tau_D})^+\right],
\end{equation}
for a family of stopping times $\mathcal{T}$ and fixed strike price $K > 0$.
Above, the subindex $x,\overline{x}$ attached to the expectation $\E_{x,\overline{x}}$ underlines the dependence of the mean on
the initial asset price $S_0=e^x$ and the historic (observed) supremum $\overline{S}_0=e^{\overline{x}}$. 
When $x=0$ we will skip this index. We restrict the domain of $V$ to $\mathcal{D} := \{ (x, \overline{x}) \in \mathbb{R}^2 : x \leq \overline{x}\}$.
In the main result we also find the optimal exercise rule, that is, a stopping rule $\tau^*$ such that
\begin{equation*}
	V(x, \overline{x}) = \E_{x,\overline{x}} [e^{-r\tau^*\wedge\tau_D}(K - S_{\tau^*\wedge\tau_D})^+].
\end{equation*}
To present the main result we first introduce so-called first scale function $W^{(r)}(x)$ of the process $X_t$ defined in \eqref{X_def}:
\begin{equation}\label{W_sum}
		W^{(r)}(x) = C e^{x} - C e^{\gamma x},
	\end{equation}
	where
\begin{equation}\label{gamma}
\gamma=-\frac{2r}{\sigma^2}
\end{equation}
and
	\begin{align}\label{C_def}
		C = \frac{1}{r + \frac{\sigma^2}{2}}.
	\end{align}
\begin{Thm}\label{thm2}
The optimal stopping barrier is the first downward asset price time
\begin{equation}\label{optimalstoppingrule}
	\tau^*=\inf\left\{t\geq 0: X_t\leq a^*\right\},
\end{equation}
where
	\begin{equation}\label{const_a}
	a^* = \log{\left( K\left(\frac{\gamma \left(e^{\gamma c} - e^c\right)}{(1-\gamma )e^c}\right)^{\frac{W^{(r)}(c)}{W^{(r)'}(c)}} \right)}.
\end{equation}
The option should be stopped at $X_t = a^*$ or $X_t = \overline{X_t}-c$, whichever occurs first. Moreover, the value function $V(x, \overline{x})$ is given explicitly in \eqref{valuecase1} (together with \eqref{V1}
-\eqref{V4} and \eqref{V5}) when $\overline{x} < a^* + c$
and in \eqref{valuefunction2} (together with \eqref{V6}-\eqref{V8}), otherwise.
\end{Thm}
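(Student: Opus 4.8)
The plan is a \emph{guess-and-verify} argument built on first-passage (threshold) strategies for $X$, exploiting two structural facts: $S_te^{-rt}=e^{X_t-rt}$ is a $\P$-martingale (equivalently $\psi(1)=r$, so the right inverse of the Laplace exponent of $X$ equals $1$), and the scale function of $X$ is the explicit sum \eqref{W_sum}, which makes every fluctuation identity below completely explicit.

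First I would fix a candidate barrier $a$ and compute $V_a(x,\overline{x}):=\E_{x,\overline{x}}[e^{-r(\tau_a\wedge\tau_D)}(K-S_{\tau_a\wedge\tau_D})^+]$ for $\tau_a=\inf\{t\geq0:X_t\leq a\}$. The computation splits into the two regimes of the theorem according to whether $\overline{x}$ lies below or above $a+c$. If $\overline{x}<a+c$, then until the running maximum reaches $a+c$ the drawdown boundary $\{\overline{X}_t-X_t=c\}$ lies strictly below $\{X_t=a\}$, so in this regime exercise can only happen at $a$ (or after the maximum grows to $a+c$, which switches regimes); if $\overline{x}\geq a+c$ the drawdown boundary lies at or above $\{X_t=a\}$, so the contract is always terminated by the drawdown event, delivering $(K-e^{\overline{X}_{\tau_D}-c})^+$. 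In each regime I would treat a generic state $(y,m)$ by splitting the path at the first time $X$ either returns to its running maximum $m$ or reaches the relevant lower level ($a$ in the first regime, $m-c$ in the second), evaluating the two contributions with the two-sided exit identities written through $W^{(r)}$ and $Z^{(r)}(x)=1+r\int_0^xW^{(r)}(u)\,du$, and applying the strong Markov property at the instant the maximum increases. Letting that increment vanish turns the recursion into a first-order linear ODE in the maximum variable $m$ for $h(m):=V_a(m,m)$, which I would integrate under the natural side conditions: $V_a\to0$ as the maximum $\to\infty$ in the second regime, and continuous matching of the two regimes at $\overline{x}=a+c$. On the continuation set of the first regime this yields the $\overline{x}$-independent form $V_a(x,\overline{x})=(K-e^a)Z^{(r)}(x-a)+D_aW^{(r)}(x-a)$, with $D_a$ fixed by the matching.

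I would then select the optimal barrier by maximising $a\mapsto V_a(x,\overline{x})$; because $X$ has a Gaussian component this is equivalent to the smooth-fit condition $\partial_xV_a(a+,\cdot)=-e^{a}$, and since $Z^{(r)\prime}(0)=0$ and $W^{(r)\prime}(0)=2/\sigma^2$ this forces $D_a=-\tfrac{\sigma^2}{2}e^{a}$. Substituting into the matching relation at $a+c$ leaves a single scalar equation for $a$ whose solution is exactly \eqref{const_a}. This is the step I expect to be the main obstacle: keeping the excursion/ODE bookkeeping and the patching of the two regimes under control, and then solving for $a^*$ in closed form, is where essentially all the real work lies; what remains is the standard verification.

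Finally I would carry out the verification. Writing $V^*:=V_{a^*}$ for the function just constructed --- \eqref{valuecase1} when $\overline{x}<a^*+c$ and \eqref{valuefunction2} otherwise, and $(K-e^x)^+$ on $\{\overline{x}-x\geq c\}$ --- I would check that $V^*$ is continuous on $\mathcal{D}$, is $C^1$ in $x$ across the exercise boundary $x=a^*$ (smooth fit), has vanishing normal derivative $\partial_{\overline{x}}V^*=0$ along the diagonal $x=\overline{x}$ (automatic in the first regime, built into the ODE in the second), satisfies $(\mathcal{L}-r)V^*=0$ on the continuation set $\{x>a^*,\ \overline{x}-x<c\}$ and $(\mathcal{L}-r)V^*=-rK\leq0$ on $\{x\leq a^*\}$ --- where $\mathcal{L}=(r-\tfrac{\sigma^2}{2})\partial_x+\tfrac{\sigma^2}{2}\partial_{xx}$ --- and dominates the payoff, $V^*\geq(K-e^x)^+$ on $\mathcal{D}$ (a one-dimensional convexity-type estimate in each regime). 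Applying It\^o's formula, together with the fact that $\overline{X}$ increases only on $\{X=\overline{X}\}$, then shows that $e^{-r(t\wedge\tau_D)}V^*(X_{t\wedge\tau_D},\overline{X}_{t\wedge\tau_D})$ is a bounded supermartingale, so optional stopping yields $V(x,\overline{x})\leq V^*(x,\overline{x})$ (using $\tau_D<\infty$ a.s.\ to handle $\{\tau=\infty\}$), while by construction $V^*(x,\overline{x})=\E_{x,\overline{x}}[e^{-r(\tau^*\wedge\tau_D)}(K-S_{\tau^*\wedge\tau_D})^+]\leq V(x,\overline{x})$. Hence $V=V^*$ and $\tau^*$ is optimal.
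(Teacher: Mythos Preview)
Your proposal is correct and shares the paper's overall guess-and-verify architecture (compute the value for a threshold rule, fix the threshold by smooth fit, then verify via an It\^o/supermartingale argument using the HJB system and the Neumann condition on the diagonal). The differences are in how the value function is actually produced and presented.

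First, the paper opens with a separate structural result (Proposition~\ref{thm1}) showing the stopping region is of down-crossing type in $x$ for each $\overline{x}$; you bypass this and rely purely on verification, which is legitimate and slightly more economical. Second, in the regime $\overline{x}\geq a+c$ the paper imports the drawdown identity \cite[eq.~(3.11)]{MP} directly to obtain $V_5$ and $V_8$, whereas you propose to regenerate those quantities from scratch by the excursion-at-the-maximum recursion and the resulting first-order ODE in $m$ for $h(m)=V_a(m,m)$. Both routes land on the same expressions; yours is more self-contained, the paper's is shorter once one is willing to cite \cite{MP}. Third, your observation that in the first regime $V_a(x,\overline{x})=(K-e^a)Z^{(r)}(x-a)+D_aW^{(r)}(x-a)$ is $\overline{x}$-independent is exactly what is hidden in the paper's decomposition $V_1+V_2(V_3+V_4V_5)$: the $\overline{x}$-dependence in $V_1,\dots,V_4$ cancels algebraically, and the paper then spends a page verifying $\partial_{\overline{x}}\hat V|_{x=\overline{x}}=0$ by hand, whereas in your formulation that Neumann condition is immediate. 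Your smooth-fit computation $D_a=-\tfrac{\sigma^2}{2}e^a$ via $W^{(r)\prime}(0)=2/\sigma^2$ matches the paper's derivation of \eqref{const_a}.

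One point to tighten: the boundary condition you invoke for the second-regime ODE should be stated at the finite endpoint $m=\log K+c$ (where the drawdown payoff becomes zero), not as an asymptotic $m\to\infty$; and the domination $V^*\geq(K-e^x)^+$ on the continuation set, which you label a ``convexity-type estimate,'' deserves an explicit line --- the paper is equally terse here.
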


\subsection{Literature overview}

Random termination is a common feature of many financial products. For example, the random termination moment could correspond to the default time of a company (see, e.g., \cite[p. 27]{BieleckiRutkowski} or \cite{Mladen}) or an asset-price-independent time cap following an exponential or Erlang distribution (see \cite{carr}). The latter case, which leads to the so-called Canadian approximation, has been analyzed in detail in \cite{Florin}. In some cases, the time cap is chosen to be unobservable; see \cite{gapeev1}.

Time-cap is closely related to cancellable options, which are terminated early when a specific event occurs. In fact, the price of the stopped option consists of the value of the cancellable option and the discounted payoff (under the risk-neutral measure) at the event time, provided the event happens before maturity.
Typically, this event is described as the first or last time when the underlying asset price
reaches a specific threshold; see, e.g., \cite{1st_pass, algo} and references therein.
Similar early termination features can be found in game options (like, e.g., Israeli options) where the seller has the right to terminate the contract early, subject to a fixed penalty paid to the buyer; see the seminal paper \cite{kifer} and subsequent works such as \cite{meyer, Zaevski2, Zaevski2b}.
Still, none of this works considers the drawdown event as a time-cap.

There are other studies addressing derivatives that closely resemble those analyzed in this paper.
Egloff, Farkas and Leippold in \cite{time_constraints} price American options with
stochastic stopping time constraints, where the exercise of the option is restricted to specific conditions being met.
These conditions, defined in terms of the states of a Markov process, are linked to a stochastic performance condition.
As the authors noted, such
performance-based constraints play an important role for structuring new
investment products and designing executive stock option plans with exercise conditions tied, for example, to outperforming a reference index.
To some extent, our motivation for considering a time-capped option is very similar:
we want to introduce an option that remains valid until is either terminated by the buyer or triggered by an event described above, whichever occurs first.
However, the difference in pricing is crucial. In \cite{time_constraints} the buyer can only exercise the option when a prespecified condition, associated with the performance of the underlying asset, is satisfied. Therefore, it was possible to transform the constrained pricing problem into an unconstrained optimal stopping problem that corresponds to a generalised barrier option pricing and a stochastic Cauchy-Dirichlet problem.
For the time-capped American option considered in this paper, we do not verify the drawdown event at the exercise time as it is in \cite{time_constraints}, but we exercise the option whenever the first drawdown event has occurred.

Several authors have also explored the concept of time-capping in American options.
In \cite{trabelsi}, the random cap is a~first hitting time of a fixed barrier by
the underlying asset price. In \cite{ott}, the fixed time cap is examined for
Russian and American look-back options. Finally, there are other related works.
For example, \cite{russian, random_put} investigate Russian options that terminate when the stock price hits its running maximum for the last time, as well as American options that terminate when the stock price reaches a prespecified level for the last time.

In our case, the American option terminates early if a drawdown event occurs. Specifically, this means the option is exercised either at a random stopping time or at the first time when the drawdown of the stock price exceeds a fixed threshold, whichever comes first.
Explicitly including protection against significant drawdowns in financial contracts is a common practice aimed at minimizing potential losses for the seller. Therefore, the list of papers addressing contracts that incorporate drawdown or drawup feature is quite long; see,
e.g. \cite{CZH, GZ, MA, ZP&JT2, ZP&JT, drawdownup2, PV, Sorn, Vec1,Vec2, drawdownup1, olympia}
and references therein.

\subsection{Organisation of the paper}
This paper is organised as follows. In the next section we give the proof of the main result. The proof is based on choosing a stopping rule and calculating the option value based on it. Next, a verification step via HJB system ensures that the chosen rule is indeed optimal and the achieved price is fair.
In Section \ref{sec:num} we present a numerical analysis.

\section{Proof of the main result}
In this section we prove Theorem \ref{thm2}.
We divide the proof into a few steps.

\subsection{Form of the stopping region}
The first crucial step is to prove that the optimal stopping rule for \eqref{define_V} is the first downward crossing time of a boundary that depends on the running supremum.
\begin{Prop}\label{thm1}
The optimal stopping time $\tau^*$ is of the following form:
\begin{equation}\label{11}
	\tau^* = \inf \{ t \geq 0: X_t \leq a_0(\overline{X_t}) \}
\end{equation}
for some function $a_0$.
\end{Prop}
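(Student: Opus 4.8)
The plan is to reformulate the optimal stopping problem \eqref{define_V} as a genuine two-dimensional stopping problem for the Markov process $(X_t, \overline{X_t})$ killed at the drawdown time $\tau_D = \inf\{t\ge 0 : \overline{X_t} - X_t \ge c\}$, and then exploit monotonicity and convexity properties of the payoff to show that the continuation region is "above" a curve in the $x$-variable for each fixed level $\overline{x}$. First I would note that $\tau_D$ is itself a hitting time for the two-dimensional process, so $(X, \overline X)$ stopped at $\tau_D$ is still strong Markov and the problem fits the classical optimal stopping theory: the optimal stopping time is $\tau^\ast = \inf\{t\ge 0 : (X_t,\overline{X_t}) \in \mathcal{S}\}$, where $\mathcal{S} = \{(x,\overline x)\in\mathcal{D} : V(x,\overline x) = (K - e^x)^+\}$ is the stopping (exercise) region. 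The content of the proposition is that for each $\overline x$ the section $\mathcal{S}_{\overline x} = \{x : (x,\overline x)\in\mathcal{S}\}$ is a down-set of the form $(-\infty, a_0(\overline x)]$ (intersected with $\mathcal{D}$).

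The key steps, in order, are as follows. (i) Since the payoff $(K-e^x)^+$ depends only on $x$ and is nonincreasing and convex in $x$, and since $X$ is a spatially homogeneous (Lévy) process, I would show that $x\mapsto V(x,\overline x)$ is nonincreasing: coupling two copies of $X$ started from $x_1 \le x_2$ with the same Brownian increments keeps $X^{(1)}_t \le X^{(2)}_t$ and $\overline{X}^{(1)}_t - X^{(1)}_t$ dominates (path by path until the max is updated) the corresponding drawdown, so the discounted payoff is pointwise larger for the lower starting point and $\tau_D$ for the lower process occurs no later; taking the same $\tau$ for both gives $V(x_1,\overline x)\ge V(x_2,\overline x)$. (ii) Next I would show $V(x,\overline x) \ge (K-e^x)^+$ always (take $\tau = 0$) and that the "gain from waiting," $V(x,\overline x) - (K-e^x)^+$, is nonincreasing in $x$ on the region where the payoff is positive; equivalently, once it is optimal to stop at some $x_0$ (for fixed $\overline x$), it is optimal to stop at every $x\le x_0$. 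The cleanest way is to show that $x\mapsto V(x,\overline x) - (K-e^x)$ is nonincreasing where $e^x < K$: both the value function argument and a direct argument using that stopping immediately is at least as good when the payoff is already near its maximum $K$ at small $x$. (iii) Combining (i)–(ii), the section $\mathcal{S}_{\overline x}$ is an interval unbounded below, so it equals $(-\infty, a_0(\overline x)] \cap \mathcal D$ for $a_0(\overline x) := \sup \mathcal{S}_{\overline x}$ (with the convention $a_0(\overline x) = -\infty$ if the section is empty), which is exactly \eqref{11}.

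The main obstacle I expect is step (ii): proving that the advantage of continuing is monotone in the starting log-price when the dynamics are killed at a \emph{path-dependent} time $\tau_D$ that itself depends on the whole trajectory through $\overline X$. The coupling in step (i) must be handled carefully because when the lower path $X^{(1)}$ sets a new running maximum, the gap $\overline{X}^{(1)} - X^{(1)}$ resets to $0$ and the clean domination between the two drawdown processes can break; one fixes this by coupling so that $\overline x$ is the same for both and observing that on the event that neither path has yet exceeded $\overline x$ the drawdowns are $\overline x - X^{(i)}$, which preserves the order, and on the complementary event one re-starts the comparison using the strong Markov property and the already-established monotonicity of $V$. An alternative, possibly cleaner route for (ii) is to invoke the general theory (e.g. Peskir–Shiryaev) once the superharmonicity/monotonicity structure of the problem is in place, reducing step (ii) to checking that the "obstacle" $(K-e^x)^+$ crosses the value function from below only once as $x$ increases, which follows from convexity of $x\mapsto (K-e^x)^+$ on $\{e^x<K\}$ together with the (super)harmonicity of $V$ for the killed generator. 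I would then remark that the precise shape of $a_0$ — in particular that it is in fact piecewise of the explicit form in Theorem \ref{thm2}, flat at $a^\ast$ for small $\overline x$ and equal to $\overline x - c$ for large $\overline x$ — is determined in the subsequent steps by solving the associated HJB free-boundary system and verifying optimality.
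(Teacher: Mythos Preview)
Your overall strategy---reducing the proposition to showing that, for each fixed $\overline{x}$, the $x$-section of the stopping set is a down-set---is exactly the paper's, and steps (i) and (iii) are fine in outline. But there is a genuine slip in step (ii): to conclude that $(x_0,\overline{x})\in\mathcal{S}$ implies $(y,\overline{x})\in\mathcal{S}$ for all $y\le x_0$, you need $x\mapsto V(x,\overline{x})-(K-e^x)$ to be \emph{nondecreasing} on $\{e^x<K\}$, not nonincreasing. With ``nonincreasing'' you would obtain the opposite (upward-closed) conclusion, which is wrong for a put. Beyond this sign issue, your coupling program in (i)--(ii) is heavier than necessary: you correctly flag that the path-dependent killing at $\tau_D$ spoils the naive pathwise domination once the running maximum updates, and your proposed fix (Markov restart at the diagonal) is plausible but never actually delivers the key Lipschitz-type bound that closes the argument.

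The paper bypasses all of this with one observation you are missing: under the risk-neutral measure $e^{-rt}S_t=e^{-rt+X_t}$ is a martingale, so $\E[e^{-r\sigma+X_\sigma}]=e^{X_0}$ for bounded stopping times $\sigma$. Taking $\sigma=\tau_y\wedge\tau_D(y,\overline{x})$---which is admissible for the problem started at $x>y$ as well, because $\tau_D(y,\overline{x})\le\tau_D(x,\overline{x})$ under the common-increment coupling---and using suboptimality of $\tau_y$ for the $x$-problem, one gets in three lines
\[
V(y,\overline{x})-V(x,\overline{x})\ \le\ (e^x-e^y)\,\E\bigl[e^{-r\sigma+X_\sigma}\bigr]\ =\ e^x-e^y\ =\ (K-e^y)-(K-e^x),
\]
which is precisely the (correctly signed) ``nondecreasing gap'' statement you need, with no coupling restarts, no convexity, and no superharmonicity appeal. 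So the fix is concrete: replace your monotonicity-by-coupling program with this optional-sampling identity for $e^{-rt}S_t$, and the proof collapses to the paper's short computation. (The paper also argues separately that the stopping set cannot be entered by upward motion along the diagonal; this is not strictly required for the form \eqref{11}, which already follows from the down-set property alone.)
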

\begin{proof}
	Let
\begin{equation}\label{stopregion}D = \{ (x,\overline{x})\in \mathbb{R}^2: V(x, \overline{x}) = (K - e^x)^+ \}
\end{equation}
be a stopping region, where the option should be exercised immediately.
Note that
\begin{equation}\label{Zt}Z_t = (X_t,\overline{X}_t)\end{equation}
is a Feller process.
By \cite[Thm. 2.7, p. 40 and (2.2.80), p. 49]{PS} it follows that $\tau^*=\inf\{t\geq 0: X_t\in D\}$.
	Suppose there exist $(x,\overline{x}) \in D$. We additionally assume that $x < \log K$, otherwise the immediate payout is zero. Let $\tau_y$ and $\tau_x$ be the optimal stopping rule for starting point $(y, \overline{x})$ and $(x, \overline{x})$ respectively. Observe that if $y < x < \log(K)$, then for a chosen $c$, we have $\tau_D(y, \overline{x}) \leq \tau_D(x,\overline{x})$, where $\tau_D(x,\overline{x})$ denotes $\tau_D$ for the starting point $(x, \overline{x})$. Further, we have:
	\begin{align*}
		V(y, \overline{x}) - V(x, \overline{x}) &= \E_{y,\overline{x}}[e^{-r\tau_y\wedge\tau_D(y,\overline{x})}(K - e^{X_{\tau_y\wedge\tau_D(y,\overline{x})}})^+] - \E_{x,\overline{x}}[e^{-r\tau_x\wedge\tau_D(x,\overline{x})}(K - e^{X_{\tau_x\wedge\tau_D(x,\overline{x})}})^+] \\
		&\leq \E_{y,\overline{x}}[e^{-r\tau_y\wedge\tau_D(y,\overline{x})}(K - e^{X_{\tau_y\wedge\tau_D(y,\overline{x})}})^+] - \E_{x,\overline{x}}[e^{-r\tau_y\wedge\tau_D(y,\overline{x})}(K - e^{X_{\tau_y\wedge\tau_D(y,\overline{x})}})^+] \\
		& = \E[e^{-r\tau_y\wedge\tau_D(y,\overline{x})}(K - e^{y + X_{\tau_y\wedge\tau_D(y,\overline{x})}})^+] - \E[e^{-r\tau_y\wedge\tau_D(y,\overline{x})}(K - e^{x + X_{\tau_y\wedge\tau_D(y,\overline{x})}})^+] \\
		& \leq \E[e^{-r\tau_y\wedge\tau_D(y,\overline{x})}(K - e^{y + X_{\tau_y\wedge\tau_D(y,\overline{x})}})] - \E[e^{-r\tau_y\wedge\tau_D(y,\overline{x})}(K - e^{x + X_{\tau_y\wedge\tau_D(y,\overline{x})}})] \\
		& = (e^x - e^y)\E[e^{-r\tau_y\wedge\tau_D(y,\overline{x}) + X_{\tau_y\wedge\tau_D(y,\overline{x})}}] = e^x - e^y.
	\end{align*}
Therefore we get
\begin{equation}
		V(y, \overline{x}) - V(x, \overline{x}) \leq  e^x - e^y = (K - e^y) - (K-e^x)
\end{equation}
and further
\begin{equation}
		V(y, \overline{x}) \leq (K - e^y) \leq  (K - e^y)^+.
\end{equation}
On the other hand the payoff function of the option cannot be higher than its value function, therefore
\begin{equation}
	V(y, \overline{x}) \geq  (K - e^y)^+.
\end{equation}
This gives $(y, \overline{x}) \in D$. This leads to the conclusion, that for a certain $\overline{x}$ the optimal stopping region can be achieved by the pair $(X_t, \overline{X_t})$ when $X_t$ drops down to some value $a_0(\overline{X_t})$ before it reaches its past maximum.

The question remains whether the optimal stopping region can be also reached from below, when both $X_t$ and $\overline{X_t}$ hit a certain level for the first time. We will show that this scenario is not possible.
Indeed, assume a contrario that there exists a threshold $b$ such that there exist $x$ and  $\overline{x} < b$ satisfying $(x, \overline{x}) \notin D$ and $(b,b) \in D$. Let us take a positive $\varepsilon$ such that $b - \varepsilon > \overline{x}$. Let $\theta$ be the first time when process $X_t$ reaches the level $b - \varepsilon$ from below, i.e.
\begin{equation}
	\theta = \inf \{ t > 0: X_t = b - \varepsilon \}
\end{equation}
and let $\tau_b$ be the first time when process $X_t$  reaches the level $b$ from below. Clearly, we have $\theta < \tau_b$ and $e^{X_\theta} < e^{X_{\tau_b}}.$ Therefore
\begin{equation}\label{contradiction}
	\E_{x,\overline{x}}[e^{-r\theta\wedge\tau_D(x,\overline{x})}(K - e^{X_{\theta\wedge\tau_D(x,\overline{x})}})^+] \geq \E_{x,\overline{x}}[e^{-r\tau_b\wedge\tau_D(x,\overline{x})}(K - e^{X_{\tau_b\wedge\tau_D(x,\overline{x})}})^+].
\end{equation}
This implies it must dominate any alternative stopping rule in terms of the expected payoff. Inequality (\ref{contradiction}) contradicts this assumption, which completes the proof.

\end{proof}

\subsection{HJB and verification step}

By $\mathcal{L}$ we denote an infinitesimal generator of the Markov process
$Z_t$ defined in \eqref{Zt}, which equals
\begin{align}
	&\mathcal{L}f(x, \overline{x}) = \left(r - \frac{\sigma^2}{2}\right) \frac{\partial}{\partial x} f(z) + \frac{\sigma^2}{2}\frac{\partial^2}{\partial x^2} f(z) \quad \text{for}\quad 0 < x < \overline{x}
\end{align}
and the domain of this (full) generator includes the functions $f\in \mathcal{C}^2_0(\mathbb{R})$ such that
\begin{align}
	& \frac{\partial}{\partial \overline{x}} f(x, \overline{x})  =  0  \quad \text{for}\quad  x = \overline{x}. \label{gen_dom_cut}
\end{align}
In the next step we prove the following verification lemma.
\begin{Lemma}\label{HJB}
Let  $\hat{V}(x, \overline{x}): \mathbb{R}^2 \rightarrow \mathbb{R}$ be a function defined on $\mathcal{D} := \{ (x, \overline{x}) \in \mathbb{R}^2 : x \leq \overline{x}\}$.
Assume that $\hat{V}(x, \overline{x}) \in \mathcal{C}^2_0(\mathbb{R})$ and that it fulfills condition \eqref{gen_dom_cut}.
Assume that for some function $b$,
\begin{align}
	(\mathcal{L} \hat{V} - r\hat{V})(x, \overline{x}) &= 0\quad \text{for}\ x> b(\overline{x}), \label{HJB1} \\
	(\mathcal{L} \hat{V} - r\hat{V})(x, \overline{x}) &\leq 0\quad \text{for}\ x\leq  b(\overline{x}),\label{HJB1b}
\end{align}
\begin{align}
	\hat{V}(x, \overline{x}) &= (K - e^x)^+\quad \text{for}\ x \leq b(\overline{x}), \\
	\hat{V}(x, \overline{x}) &> (K - e^x)^+\quad \text{for}\ x> b(\overline{x}),
\end{align}
\begin{align}
	\hat{V}(x, \overline{x}) \big|_{x = b(\overline{x})} &= (K - e^{b(\overline{x})})^+, \label{HJB6}  \\
		\frac{\partial}{\partial x} \hat{V}(x, \overline{x}) \big|_{x = b(\overline{x})} &= \frac{\partial}{\partial x} (K - e^x)^+ \big|_{x = b(\overline{x})} \quad \text{if}\ b(\overline{x})<\overline{x}-c.\label{HJB7}
\end{align}
Then $\hat{V}(x, \overline{x})\geq V(x, \overline{x})$.
\end{Lemma}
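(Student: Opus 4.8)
The plan is the standard verification (supermartingale) argument for optimal stopping, adapted to the two-dimensional Feller process $Z_t=(X_t,\overline{X}_t)$ and the extra killing at $\tau_D$. First I would check that $\hat V$, although only assumed $\mathcal{C}^2$ away from the boundary $x=b(\overline x)$ and along the diagonal, is smooth enough to apply an Itô/Dynkin formula: the hypotheses give $\mathcal{C}^2_0$ regularity together with the smooth-fit condition \eqref{HJB7} across $x=b(\overline x)$ (when $b(\overline x)<\overline x-c$) and the Neumann-type condition \eqref{gen_dom_cut} along $x=\overline x$, so the change-of-variables formula applies with no local-time term on the diagonal (because $\overline X$ increases only when $X=\overline X$, exactly where $\partial_{\overline x}\hat V=0$) and no singular contribution at the exercise boundary (because of smooth fit, or because on the part of the boundary where smooth fit is not imposed, namely $b(\overline x)=\overline x-c$, the process is simultaneously killed by $\tau_D$ and never crosses).

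Next, fix $(x,\overline x)\in\mathcal D$ and an arbitrary $\tau\in\mathcal T$, and apply Itô's formula to the discounted process $e^{-r(t\wedge\tau_D)}\hat V(Z_{t\wedge\tau_D})$. The drift term is $e^{-r s}(\mathcal L\hat V-r\hat V)(Z_s)$, which by \eqref{HJB1}–\eqref{HJB1b} is $\le 0$ on all of $\mathcal D$ (it is $0$ in the continuation region $x>b(\overline x)$ and $\le0$ in the stopping region). Hence $M_t:=e^{-r(t\wedge\tau_D)}\hat V(Z_{t\wedge\tau_D})$ plus a nonnegative increasing part equals a local martingale started at $\hat V(x,\overline x)$; in particular $M_{t\wedge\tau}$ is a supermartingale (after a standard localization using the boundedness built into $\mathcal C^2_0$, or a localizing sequence $\tau_n\uparrow\infty$). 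Therefore, by optional stopping,
\[
\hat V(x,\overline x)\;\ge\;\E_{x,\overline x}\!\left[e^{-r(t\wedge\tau\wedge\tau_D)}\hat V\big(Z_{t\wedge\tau\wedge\tau_D}\big)\right].
\]

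Now I would let $t\to\infty$. On $\{\tau\wedge\tau_D<\infty\}$ the integrand converges to $e^{-r(\tau\wedge\tau_D)}\hat V(Z_{\tau\wedge\tau_D})$, and on $\{\tau\wedge\tau_D=\infty\}$ the discount factor forces it to $0$ (using that $\hat V$ is bounded, which also supplies the uniform integrability needed to pass the limit inside the expectation via dominated convergence). Using $\hat V\ge (K-e^x)^+\ge0$ pointwise on $\mathcal D$, we obtain
\[
\hat V(x,\overline x)\;\ge\;\E_{x,\overline x}\!\left[e^{-r(\tau\wedge\tau_D)}\hat V\big(Z_{\tau\wedge\tau_D}\big)\right]\;\ge\;\E_{x,\overline x}\!\left[e^{-r(\tau\wedge\tau_D)}\big(K-S_{\tau\wedge\tau_D}\big)^+\right].
\]
Taking the supremum over $\tau\in\mathcal T$ gives $\hat V(x,\overline x)\ge V(x,\overline x)$, which is the claim.

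The main obstacle I anticipate is the regularity/Itô bookkeeping in step one: one must justify that no extra term appears either from the reflection of $\overline X$ on the diagonal $x=\overline x$ or from the kink of $\hat V$ along the free boundary $x=b(\overline x)$. The diagonal issue is handled precisely by the domain condition \eqref{gen_dom_cut}, and the free-boundary issue by the smooth-fit condition \eqref{HJB7} on the region where it is imposed; on the complementary region $b(\overline x)=\overline x-c$ one argues that the boundary is reached only together with the drawdown killing $\tau_D$, so the process does not actually diffuse across that part of the boundary and the missing $\mathcal C^1$-matching is harmless. Once this is set up carefully, the supermartingale inequality and the limit passage are routine.
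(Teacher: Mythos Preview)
Your proposal is correct and follows essentially the same verification argument as the paper: apply It\^o's formula to $e^{-rt}\hat V(Z_t)$, use condition \eqref{gen_dom_cut} to kill the $\dx\overline X_u$ integral, the smooth-fit condition \eqref{HJB7} to kill the local-time term (with your observation that on the piece $b(\overline x)=\overline x-c$ the process is killed by $\tau_D$ before crossing, so no smooth fit is needed there), and \eqref{HJB1}--\eqref{HJB1b} to make the drift nonpositive, yielding the supermartingale property. The only cosmetic difference is that the paper finishes by invoking the superharmonic-majorant characterization from Peskir--Shiryaev, whereas you carry out the optional stopping and limit passage by hand; both arrive at $\hat V\geq V$.
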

\begin{Rem}
\rm Conditions \eqref{HJB6} and \eqref{HJB7} are the so-called smooth paste conditions of the value function. Note that the condition \eqref{HJB6} is mainly required
to write \eqref{HJB7} which is used in the proof of Lemma \ref{HJB}.
\end{Rem}
\begin{proof}
Due to the assumed smoothness of $\hat{V}$, the smooth paste conditions \eqref{HJB6} - \eqref{HJB7} and an appropriate version It\^{o}'s theorem
(see \cite[p. 208]{EisenbaumKyprianou}) we have
\begin{align}
	&e^{-rt}\hat{V}(X_t, \overline{X}_t) =
\hat{V}(x,\overline{x}) + \sigma\int_0^t e^{-r u} \frac{\partial}{\partial x} \hat{V}(X_u, \overline{X}_u) \dx B_u + \int_0^t e^{-r u} \frac{\partial}{\partial \overline{x}} \hat{V}(X_u, \overline{X}_u) \dx \overline{X}_u \nonumber\\
&\quad + \int_0^t e^{-r u} \left( \mathcal{L}\hat{V}(X_u, \overline{X}_u) - r\hat{V}(X_u, \overline{X}_u) \right) \dx u\\&\quad  +\frac{1}{2}
\int_0^t \left( \frac{\partial}{\partial x} \hat{V}(x, \overline{x}) \big|_{x = b(\overline{x})} - \frac{\partial}{\partial x} (K - e^x)^+ \big|_{x = b(\overline{x})}\right)\dx L(s), \nonumber
\end{align}
where $L$ is a local time of the process $X - b(\overline{X})$ at 0. Observe that $\dx \overline{X}_u = \I\{ X_u = \overline{X}_u \}\dx X_u$.

Now, requirement (\ref{gen_dom_cut}) guarantees that the integral over $\dx \overline{X}_u$ is zero. Similarly, the smooth-paste conditions make sure that the integral over the local time also vanishes. Finally, relations (\ref{HJB1}) and (\ref{HJB1b}) lead to the conclusion that the integral over $\dx u$ is non-positive. If we take the expectation of both sides, we get the following result
\begin{align*}
	e^{-rt}&\E_{x,\overline{x}}\hat{V}(X_t, \overline{X}_t) = \hat{V}(x,\overline{x}) + \sigma \E_{x,\overline{x}} \int_0^t e^{-r u} \frac{\partial}{\partial x} \hat{V}(X_u, \overline{X}_u) \dx B_u \\
	 &+  \E_{x,\overline{x}} \int_0^t e^{-r u} \left( \mathcal{L}\hat{V}(X_u, \overline{X}_u) - r\hat{V}(X_u, \overline{X}_u) \right) \dx u \leq \hat{V}(x,\overline{x})
\end{align*}
since the integral over Brownian motion is a zero-mean local martingale. Hence by made assumptions, the process  $e^{-rt\wedge \tau_D}\hat{V}(X_{t\wedge \tau_D}, \overline{X}_{t\wedge \tau_D})=e^{-rt\wedge \tau_D}\hat{V}(Z_{t\wedge \tau_D})$ is a supermartingale and $\hat{V}(x,\overline{x})$ is a superharmonic function that dominates the payout. Now, from \cite[(2.2.80), p. 49]{PS} we additionally know that $\hat{V}$ is also lower semi-continuous. It allows us to use \cite[Thm. 2.7, p. 40]{PS} and claim that it is the optimal solution to the considered stopping problem. Hence $\hat{V}(x, \overline{x})\geq V(x, \overline{x})$.
\end{proof}
\begin{Rem}\label{uwaga2}
By Proposition \ref{thm1} we know that the optimal stopping rule $\tau^*$
is of the one-sided form. In the next step, we postulate that the optimal
stopping boundary is even more specific, namely, that
\begin{itemize}
\item $b(\overline{x})=a^*$ where $a^*$ is defined in \eqref{const_a} when $\overline{x} < a^* + c$;
\item $b(\overline{x})=\overline{x}-c$ when $a^* + c <  \overline{x} < \log(K)+c$
\end{itemize}
for $a^*$ defined formally in \eqref{const_a}.
We will calculate the value function
\begin{equation*}
\hat{V}(x, \overline{x})=\E_{x,\overline{x}} [e^{-r\tau^*\wedge\tau_D}(K - S_{\tau^*\wedge\tau_D})^+]
\end{equation*}
for this postulated stopping rule  $\tau^*$ and we will show that all the assumptions of Lemma \ref{HJB}
are satisfied for $\hat{V}$. Hence in this case
$\hat{V}(x, \overline{x})\geq V(x, \overline{x})$  by  Lemma \ref{HJB} and $\hat{V}(x, \overline{x})\leq V(x, \overline{x})$ due to the fact that
we choose specific stopping rule. Thus $\hat{V}(x, \overline{x})=V(x, \overline{x})$ is a true value function
and $\tau^*$ is the optimal stopping rule.

From general stopping theory applied to the Markov process $(S_{t\wedge \tau_D}, \overline{S}_{t\wedge \tau_D})$
(see \cite[Thm. 2.7, p. 40 and (2.2.80), p. 49]{PS}) it follows that the stopping region is the set when the value function meets the payout function
and hence it is unique which is due to the existence of the value function
$V(x, \overline{x})$. In other words, our stopping region is unique as well.
\end{Rem}

\subsection{Identifcation of the value function}
According to Remark \ref{uwaga2}, we
choose
\begin{itemize}
\item $b(\overline{x})=a^*$ where $a^*$ is defined in \eqref{const_a} when $\overline{x} < a^* + c$;
\item $b(\overline{x})=\overline{x}-c$ when $a^* + c <  \overline{x} < \log(K)+c$,
\end{itemize}
and we will find now the value function $\hat{V}(x, \overline{x})$ for this stopping time. Later we will verify that
the function
\begin{equation*}
\hat{V}(x, \overline{x})=\E_{x,\overline{x}} [e^{-r\tau^*\wedge\tau_D}(K - S_{\tau^*\wedge\tau_D})^+]
\end{equation*}
satisfies all assumptions of the verification lemma.

Observe first
that when $\overline{x} \geq  \log(K)+c$ then the option is immediately exercised.
In this case assertion of Theorem \ref{thm2} is trivial and holds true.

When the second case hold, that is, when
$ a^* + c \leq   \overline{x} < \log(K)+c$, then the level $a^*$ cannot
be achieved by the process $X_t$ since the drawdown event will happen first.
Therefore we choose $b(\overline{x})=\overline{x}-c$ to verify all conditions of Lemma \ref{HJB}.

To identify the value function $\hat{V}(x, \overline{x})$, we will need some fluctuation identities and so-called scale functions.
In particular, we will need	

the first scale function which for $r \geq 0$ is defined as a continuous function $W^{(r)}$ on non-negative half-line with a Laplace transform
	$\int_0^\infty e^{-\theta x}W^{(r)}(x)\dx x = \frac{1}{\left(r - \frac{\sigma^2}{2}\right)\theta + \frac{\sigma^2\theta^2}{2}-r}$  for sufficiently large $\theta$. 
Hence $W^{(r)}(x)$ is given in \eqref{W_sum}.

With the first scale function we associate the second one given by
	\begin{equation*}
	Z^{(r)}(x) = 1 + r\int_{0}^{x} W^{(r)}(y)\dx y.
	\end{equation*}
By direct checking we can verify that
\begin{equation}\label{martingalescale}
\mathcal{L}W^{(r)}(x)-rW^{(r)}(x)=0 \quad \text{and}\quad  \mathcal{L}Z^{(r)}(x)-rZ^{(r)}(x)=0.
\end{equation} 	

\subsubsection{The case when $\overline{x} < a^* + c$}	
We start from the case when $\overline{x} < a^* + c$, that is, when $b(\overline{x})=a^*$ in Lemma \ref{HJB} with $a^*$ defined in \eqref{const_a}. Let $\tau = \tau^* \wedge \tau_D$ be the time then the option is stopped either by downward crossing level $a^*$ or by the drawdown event. Observe that in this case
	\begin{align}\label{V_div2}
		\hat{V}(x, \overline{x}) &=  \E_{x, \overline{x}}\left[e^{-r\tau}\left(K - e^{X_{\tau}}\right)^+\right] = \E_{x, \overline{x}}\left[e^{-r\tau}\left(K - e^{X_{\tau}}\right)^+\I\{ \tau^-_{a^*\vee \overline{x}-c} < \tau^+_{\overline{x}} \} \right]
		\\
		&\quad + \E_{x, \overline{x}}\left[e^{-r\tau}\left(K - e^{X_{\tau}}\right)^+\I\{ \tau^-_{a^*\vee \overline{x}-c} > \tau^+_{\overline{x}} \} \right] = \E_{x, \overline{x}}\left[e^{-r\tau^-_{a^*}}\left(K - e^{a^*}\right)\I\{ \tau^-_{a^*} < \tau^+_{\overline{x}} \} \right]
		\nonumber\\
		&\quad + \E_{x, \overline{x}}\left[ e^{-r\tau^+_{\overline{x}}} \I \{ \tau^+_{\overline{x}} < \tau^-_{a^*} \} \right] \E_{\overline{x}, \overline{x}}\left[e^{-r\tau}\left(K - e^{X_{\tau}}\right)^+\right]
		\nonumber\\
		&=  \E_{x, \overline{x}}\left[e^{-r\tau^-_{a^*}}\left(K - e^{a^*}\right)\I\{ \tau^-_{a^*} < \tau^+_{\overline{x}} \} \right]  + \E_{x, \overline{x}}\left[ e^{-r\tau^+_{\overline{x}}} \I \{ \tau^+_{\overline{x}} < \tau^-_{a^*} \} \right]
		\nonumber\\
		& \quad \times \left(\E_{\overline{x}, \overline{x}}\left[e^{-r\tau^-_{a^*}}\left(K - e^{a^*}\right) \I\{ \tau^-_{a^*} < \tau^+_{a^*+c} \} \right]  + \right.
		\nonumber\\
		&\quad + \left.  \E_{x, \overline{x}}\left[ e^{-r\tau^+_{a^*+c}} \I\{ \tau^-_{a^*} > \tau^+_{a^*+c} \} \right] \E_{a^*+c, a^*+c}\left[e^{-r\tau_D}\left(K - e^{X_{\tau_D}}\right) \I\{ \tau_D < \tau^+_{\log K + c} \} \right] \right),
		\nonumber\\
&:= V_1(x, \overline{x})+V_2(x, \overline{x})\Bigl(V_3(\overline{x}) +V_4(\overline{x})V_5\Bigr),\label{valuecase1}
	\end{align}
	where
\[\tau_x^- = \inf\{ t \geq 0:\  X_t \leq x \} \quad \text{and}\quad \tau_x^+ = \inf\{ t \geq 0:\ X_t \geq x \}.\]
	The condition $\tau_D < \tau^+_{\log K + c}$ in $V_5$ is necessary to ensure that $K \geq e^{X_{\tau_D}}$ so that $\left(K - e^{X_{\tau_D}}\right)^+ = \left(K - e^{X_{\tau_D}}\right)$.
	We will analyse above components one by one. Using formula \cite[eq. (2.4)]{MP} we get that
	\begin{equation}\label{V1}
		V_1(x, \overline{x}) = \E_{x, \overline{x}}\left[e^{-r\tau}\left(K - e^{a^*}\right)\I\{ \tau^-_{a^*} < \tau^+_{\overline{x}} \} \right] = \left(K - e^{a^*}\right)\left( Z^{(r)}(x-a^*) - Z^{(r)}(\overline{x}-a^*)\frac{W^{(r)}(x-a^*)}{W^{(r)}(\overline{x}-a^*)} \right)
	\end{equation}
	and
	\begin{equation}\label{V3}
		V_3(\overline{x}) = \E_{\overline{x}, \overline{x}}\left[e^{-r\tau^-_{a^*}}\left(K - e^{a^*}\right) \I\{ \tau^-_{a^*} < \tau^+_{a^*+c} \} \right] = \left(K - e^{a^*}\right)\left( Z^{(r)}(\overline{x}-a^*) - Z^{(r)}(c)\frac{W^{(r)}(\overline{x}-a^*)}{W^{(r)}(c)} \right).
	\end{equation}
	From \cite[eq. (2.3)]{MP} we also have that
	\begin{equation}\label{V2}
		V_2(x, \overline{x}) = \E_{x, \overline{x}}\left[ e^{-r\tau^+_{\overline{x}}} \I \{ \tau^+_{\overline{x}} < \tau^-_{a^*} \} \right] = \frac{W^{(r)}(x-a^*)}{W^{(r)}(\overline{x}-a^*)}
	\end{equation}
	and
	\begin{equation}\label{V4}
		V_4(\overline{x}) = \E_{x, \overline{x}}\left[ e^{-r\tau^+_{a^*+c}} \I\{ \tau^-_{a^*} > \tau^+_{a^*+c} \} \right] = \frac{W^{(r)}(\overline{x}-a^*)}{W^{(r)}(c)}.
	\end{equation}
	To find the last component $V_5=\E_{a^*+c, a^*+c}\left[e^{-r\tau_D}\left(K - e^{X_{\tau_D}}\right) \I\{ \tau_D < \tau^+_{\log K + c} \} \right]$
	we introduce the following notations:
	\begin{align}
		&\lambda(d,r) = \frac{W^{(r)'}(d)}{W^{(r)}(d)}, \\ 
		&F_{r,d}(y) = \lambda(d,r)e^{-y\lambda(d,r)},\quad y \in \mathbb{R}_+, \label{v5a}\\
		&\Delta(d,r) = \frac{\sigma^2}{2}\left[ W^{(r)'}(d) - \lambda(d,r)^{-1}W^{(r)''}(d) \right].
	\end{align}
	Now, by \cite[eq. (3.11)]{MP} we obtain
	\begin{align*}
		V_5 = &\E_{a^*+c, a^*+c}\left[e^{-r\tau_D}\left(K - e^{X_{\tau_D}}\right) \I\{ \tau_D < \tau^+_{\log K + c} \} \right] = K\E_{a^*+c, a^*+c} \left[e^{-r\tau_D}\I\{ \tau_D < \tau^+_{\log K + c} \} \right]
		\\
		&- \E_{a^*+c, a^*+c} \left[e^{-r\tau_D + X_{\tau_D}}\I\{ \tau_D < \tau^+_{\log K + c} \} \right] = K\int_{a^*+c}^{\log K +c} F_{r,c}(v - (a^*+c))\Delta(c,r) \dx v
		\nonumber\\
		&- \int_{a^*+c}^{\log K +c}e^{v-c} F_{r,c}(v - (a^*+c))\Delta(c,r) \dx v.		\nonumber
	\end{align*}
Plugging the expression for $F_{r,d}(y)$ in \eqref{v5a} we derive
	\begin{align}
		V_5 =&= \Delta(c,r) e^{(a^*+c)\lambda(c,r)}\left( K\left(e^{-(a^*+c)\lambda(c,r)} - e^{-(\log K+c)\lambda(c,r)}\right) \right.
		\nonumber\\
		&- \left. e^{-c}\frac{\lambda(c,r)}{1-\lambda(c,r)}\left( e^{(1-\lambda(c,r))(\log(K)+c)} - e^{(1-\lambda(c,r))(a^*+c)} \right) \right) \nonumber\\
		&= \Delta(c,r)\left( K\left(1 - \frac{e^{-\lambda(c,r)\left(\log(K) - a^*\right)}}{1-\lambda(c,r)}\right) + \frac{\lambda(c,r)e^{a^*}}{1-\lambda(c,r)} \right). \label{V5}
	\end{align}

Inserting formulas (\ref{V1})-(\ref{V4}) and (\ref{V5}) into \eqref{V_div2} gives the price $\hat{V}(x, \overline{x})$
which is in $\mathcal{C}^2_0$ and it is a linear combination of the scale functions $W$ and $Z$ and therefore by \eqref{martingalescale} condition we have that
$\mathcal{L} \hat{V} - r\hat{V}(x, \overline{x}) = 0$ and conditions \eqref{HJB1} and \eqref{HJB1b} are satisfied.

We recall that to have $\hat{V}(x, \overline{x})$ in the domain of the generator $\mathcal{L}$ of the Markov process $Z_t$ (see \eqref{gen_dom_cut}) we need
	\begin{equation}
		\frac{\partial }{\partial \overline{x}} \hat{V} (x, \overline{x}) \big|_{x = \overline{x}} = 0
	\end{equation}
	which we will show now. Observe that
	\begin{align*}
		\frac{\partial }{\partial \overline{x}} \hat{V} (x, \overline{x}) = &\frac{\partial }{\partial \overline{x}} V_1(x, \overline{x}) + V_5\left( V_4(\overline{x})
\frac{\partial }{\partial \overline{x}} V_2(x, \overline{x}) + V_2(x, \overline{x}) \frac{\partial }{\partial \overline{x}} V_4(\overline{x}) \right)
\\& + V_2(x, \overline{x}) \frac{\partial }{\partial \overline{x}} V_3(\overline{x}) + V_3(\overline{x}) \frac{\partial }{\partial \overline{x}} V_2(x, \overline{x}).
	\end{align*}
Note that $V_4(\overline{x}) V_2(x, \overline{x})$ does not depend on
 $\overline{x}$.  Therefore we have $ V_4(\overline{x}) \frac{\partial }{\partial \overline{x}} V_2(x, \overline{x}) + V_2(x, \overline{x}) \frac{\partial }{\partial \overline{x}} V_4(\overline{x}) = 0$.
Hence	
\begin{align}
			&\frac{\partial }{\partial \overline{x}} \hat{V}(x, \overline{x}) \big|_{x = \overline{x}} = -\left( K - e^{a^*} \right)\frac{Z^{(r)'}(\overline{x}-a)W^{(r)}(\overline{x}-a) - W^{(r)'}(\overline{x}-a)Z^{(r)}(\overline{x}-a)}{W^{(r)}(\overline{x}-a)} \\
			\quad & - \frac{W^{(r)'}(\overline{x}-a)}{W^{(r)}(\overline{x}-a)}\left( K - e^{a^*}\right)\left( Z^{(r)}(\overline{x}-a) - \frac{Z^{(r)}(c)}{W^{(r)}(c)}W^{(r)}(\overline{x}-a) \right) \nonumber\\
			\quad & + \left( K - e^{a^*}\right) \left( Z^{(r)}(\overline{x}-a) - \frac{Z^{(r)'}(c)}{W^{(r)}(c)}W^{(r)'}(\overline{x}-a) \right). \nonumber
	\end{align}
Now, simple algebra and simplifications produce
\begin{align}
\frac{\partial }{\partial \overline{x}} \hat{V}(x, \overline{x}) \big|_{x = \overline{x}}
			 = \frac{K - e^{a^*}}{W^{(r)}(\overline{x}-a)} \left( W^{(r)'}(\overline{x}-a)Z^{(r)}(\overline{x}-a) - Z^{(r)'}(\overline{x}-a)W^{(r)}(\overline{x}-a) - W^{(r)'}(\overline{x}-a)Z^{(r)}(\overline{x}-a)  \right. \nonumber\\
		\quad  + \left. \frac{Z^{(r)}(c)}{W^{(r)}(c)}W^{(r)}(\overline{x}-a)W^{(r)'}(\overline{x}-a) + Z^{(r)'}(\overline{x}-a)W^{(r)}(\overline{x}-a) -  \frac{Z^{(r)}(c)}{W^{(r)}(c)}W^{(r)}(\overline{x}-a)W^{(r)'}(\overline{x}-a) \right) = 0\nonumber
	\end{align}
which gives \eqref{gen_dom_cut}.

We recall that in the first case $b(\overline{x})=a^*$. Observe that $W^{(r)}(0) = 0$ and $Z^{(r)}(0) = 1$. Because of that, 	$V_1(a^*, \overline{x}) = \left(K - e^{a^*}\right) $ and $V_2(a^*, \overline{x}) = 0$. This immediately gives (\ref{HJB6}). We will show that the smooth paste condition (\ref{HJB7}) holds true, that is
	\begin{equation}
		\frac{\partial}{\partial x} \hat{V}(x, \overline{x}) \big|_{x = a^*} =  -e^{a^*}.
	\end{equation}
	Observe, that out of factors $V_1$ to $V_5$ only $V_1$ and $V_2$ are dependent on $x$. Therefore
	\begin{equation}\label{raza}
		\frac{\partial}{\partial x} \hat{V}(x, \overline{x}) \big|_{x = a^*} = \frac{\partial}{\partial x} V_1(x, \overline{x}) \big|_{x = a^*} + (V_3(\overline{x}) + V_4(\overline{x}) V_5) \frac{\partial}{\partial x} V_2(x, \overline{x}) \big|_{x = a^*}.
	\end{equation}
	Direct computations give
	\begin{equation}\label{razb}
		\frac{\partial}{\partial x} V_1(x, \overline{x}) \big|_{x = a^*} = \left( K - e^{a^*} \right)\left( -(1-\gamma)C \frac{Z^{(r)}(\overline{x}-a^*)}{W^{(r)}(\overline{x}-a^*)}  \right)
	\end{equation}
	and
		\begin{equation}\label{razc}
		\frac{\partial}{\partial x} V_2(x, \overline{x}) \big|_{x = a^*} =  \frac{(1-\gamma)C}{W(\overline{x}-a^*)}.
	\end{equation}
	Further, from \eqref{raza} and using \eqref{razb} and \eqref{razc} we can conclude that
	\begin{align}
		&\frac{\partial}{\partial x} \hat{V}(x, \overline{x}) \big|_{x = a^*} = \frac{(1-\gamma)C}{W(\overline{x}-a^*)} \left[ -\left( K - e^{a^*} \right)Z^{(r)}(\overline{x}-a^*) \right. \\
		\nonumber  &+ \left. \left(K - e^{a^*}\right)\left( Z^{(r)}(\overline{x}-a^*) - Z^{(r)}(c)\frac{W^{(r)}(\overline{x}-a^*)}{W^{(r)}(c)} \right) +  V_4(\overline{x}) V_5 \right].
\end{align}	
Now, we plug the expression $V_4(\overline{x})$ given in \eqref{V4}:
\begin{align}
		\nonumber 	\frac{\partial}{\partial x} \hat{V}(x, \overline{x}) \big|_{x = a^*}&= \frac{(1-\gamma)C}{W(\overline{x}-a^*)}\left[ -\left( K - e^{a^*} \right)Z^{(r)}(c)\frac{W^{(r)}(\overline{x}-a^*)}{W^{(r)}(c)} + \frac{W^{(r)}(\overline{x}-a^*)}{W^{(r)}(c)} V_5 \right] \\
\end{align}	
and then the expression $V_5$ given in \eqref{V5}:
\begin{align}
		\nonumber 	\frac{\partial}{\partial x} \hat{V}(x, \overline{x}) \big|_{x = a^*}&= \frac{(1-\gamma)C}{W(c)}\left[ \Delta(c,r)\left( K\left(1 - \frac{e^{-\lambda(c,r)\left(\log(K) - a^*\right)}}{1-\lambda(c,r)}\right) + \frac{\lambda(c,r)e^{a^*}}{1-\lambda(c,r)} \right) - Z^{(r)}(c) \left(K - e^{a^*}\right) \right],
\end{align}	
Finally, after simple algebra we have
\begin{align}
		\nonumber
\frac{\partial}{\partial x} \hat{V}(x, \overline{x}) \big|_{x = a^*}&= \frac{1-\gamma}{e^c - e^{\gamma c}} \left[ \frac{\Delta(c,r)}{1-\lambda(c,r)} \left( K\left( 1 - \lambda(c,r) - e^{-\lambda(c,r)\left(\log(K) - a^*\right)} \right) + \lambda(c,r)e^{a^*} \right) - \frac{e^{\gamma c}-\gamma e^c}{1-\gamma}\left(K - e^{a^*}\right) \right] \\
		\nonumber &= \frac{1-\gamma}{e^c - e^{\gamma c}} \left[ -\left(K - e^{a^*}\right) \left( \frac{e^{\gamma c}-\gamma e^c}{1-\gamma} + \frac{\lambda(c,r)\Delta(c,r)}{1-\lambda(c,r)} \right) + \frac{K\Delta(c,r)}{1-\lambda(c,r)}\left(1 - e^{-\lambda(c,r)\left(\log(K) - a^*\right)} \right) \right].
	\end{align}	
Observe that
	\begin{equation}
		\frac{\lambda(c,r)\Delta(c,r)}{1-\lambda(c,r)} = -e^{c}.
	\end{equation}
	Therefore:
	\begin{align}
			&\frac{\partial}{\partial x} \hat{V}(x, \overline{x}) \big|_{x = a^*} = \frac{1-\gamma}{e^c - e^{\gamma c}} \left[ \left(K - e^{a^*}\right) \left( e^c - \frac{e^{\gamma c}-\gamma e^c}{1-\gamma} +  \right) + \frac{Ke^c}{\lambda(c,r)}\left(1 - e^{-\lambda(c,r)\left(\log(K) - a^*\right)} \right) \right] \\
			\nonumber & = \frac{1-\gamma}{e^c - e^{\gamma c}} \left[ \left(K - e^{a^*}\right)\frac{e^c - e^{\gamma c}}{1-\gamma} - Ke^c \frac{e^c - e^{\gamma c}}{e^c - \gamma e^{\gamma c}} \left(1 - e^{-\lambda(c,r)\left(\log(K) - a^*\right)} \right)\right]\\
			\nonumber & = -e^{a^*} + K\left[ 1 - \frac{(1-\gamma)e^c}{e^c - \gamma e^{\gamma c}} \left( 1 - e^{ -\lambda(c,r)(\log(K) - a^*) } \right) \right].
	\end{align}
Now,  $a^*$ given in \eqref{const_a} is chosen in such a way that
	\begin{equation*}
		1 - \frac{(1-\gamma)e^c}{e^c - \gamma e^{\gamma c}} \left( 1 - e^{ -\lambda(c,r)(\log(K) - a^*) } \right) = 0
	\end{equation*}
which completes the proof of the smooth paste condition.	

This completes the proof of the main result in the first case.

\subsection{The case of $ a^* + c <  \overline{x} < \log(K)+c$}	
We recall that in this case we choose $b(\overline{x})=\overline{x}-c$ in Lemma \ref{HJB}.
Furthermore, we have
		\begin{align}\label{V_div3}
		\hat{V}(x, \overline{x}) &=  \E_{x, \overline{x}}\left[ e^{-r\tau^-_{\overline{x}-c}}\left( K - e^{X_{\tau^-_{\overline{x}-c}}} \right) \I\{ \tau^-_{\overline{x}-c} < \tau^+_{\overline{x}} \} \right] \\
		& \quad + \E_{x, \overline{x}}\left[ e^{-r\tau^+_{\overline{x}}}  \I\{ \tau^-_{\overline{x}-c} > \tau^+_{\overline{x}} \}\right]\E_{\overline{x}, \overline{x}}\left[ e^{-r\tau_D}\left( K - e^{X_{\tau_D}} \right)  \I\{ \tau_D < \tau^+_{\log(K)+c} \} \right] \nonumber
\\&:= V_6(x, \overline{x})+V_7(x, \overline{x})V_8(\overline{x}).\label{valuefunction2}
	\end{align}
Now we use \cite[equations (2.3), (2.4), (3.11)]{MP} to get
\begin{align}
	V_6(x, \overline{x}) &= \E_{x, \overline{x}}\left[ e^{-r\tau_D}\left( K - e^{X_{\tau_D}} \right) \I\{ \tau^-_{\overline{x}-c} < \tau^+_{\overline{x}} \} \right] \nonumber\\&= \left( K - e^{\overline{x}-c} \right) \left[ Z^{(r)}(x+c-\overline{x}) - \frac{Z^{(r)}(c)}{W^{(r)}(c)}W^{(r)}(x+c-\overline{x}) \right],
\label{V6}
\\
	V_7(x, \overline{x}) &= \E_{x, \overline{x}}\left[ e^{-r\tau^+_{\overline{x}}}  \I\{ \tau^-_{\overline{x}-c} > \tau^+_{\overline{x}} \}\right] =\frac{W^{(r)}(x+c-\overline{x})}{W^{(r)}(c)}, \label{V7}\\
	V_8(\overline{x}) &= \E_{\overline{x}, \overline{x}}\left[ e^{-r\tau_D}\left( K - e^{X_{\tau_D}} \right)  \I\{ \tau_D < \tau^+_{\log(K)+c} \} \right] \nonumber\\&= \Delta(c,r)K - e^{\overline{x}} + Ke^{\lambda(c,r)(\overline{x} - \log(K) - c)}\left( e^c - \Delta(c,r) \right).\label{V8}
\end{align}
Observe that in this case $\hat{V}(x, \overline{x})$
is again in $\mathcal{C}^2_0$ and it is a linear combination of the scale functions $W$ and $Z$ and therefore by \eqref{martingalescale} we have
$(\mathcal{L} \hat{V} - r\hat{V})(x, \overline{x}) = 0$ and conditions \eqref{HJB1} and \eqref{HJB1b} are satisfied.
To have \eqref{gen_dom_cut} satisfied, we need to verify that
$	\frac{\partial}{\partial \overline{x}} \hat{V}(x, \overline{x}) \big|_{x = \overline{x}} = 0$. Note that
\begin{align}
	&\frac{\partial}{\partial \overline{x}} \hat{V}(x, \overline{x}) \big|_{x = \overline{x}} = \left. \left(\frac{\partial}{\partial \overline{x}} V_6(x, \overline{x})  +  V_7(x,\overline{x})\frac{\partial}{\partial \overline{x}} V_8( \overline{x}) + V_8(\overline{x})\frac{\partial}{\partial \overline{x}} V_7(x, \overline{x}) \right) \right|_{x = \overline{x}} \\
	& = \left( K - e^{\overline{x}-c} \right) \left[ -Z^{(r)'}(c) + \frac{Z^{(r)}(c)}{W^{(r)}(c)}W^{(r)'}(c) \right] - e^{\overline{x}} + K\lambda(c,r)e^{\lambda(c,r)(\overline{x} - \log(K) - c)}\left( e^c - \Delta(c,r) \right) \nonumber\\
	&- \frac{W^{(r)'}(c)}{W^{(r)}(c)}\left( \Delta(c,r)K - e^{\overline{x}} + Ke^{\lambda(c,r)(\overline{x} - \log(K) - c)}\left( e^c - \Delta(c,r) \right) \right). \nonumber
\end{align}
We also have $ Z^{(r)'}(c) = rW^{(r)}(c)$ and $\frac{W^{(r)'}(c)}{W^{(r)}(c)} = \lambda(c,r)$. Therefore,
\begin{align}\label{niewiem1}
	 &\frac{\partial}{\partial \overline{x}} \hat{V}(x, \overline{x}) \big|_{x = \overline{x}} = \left( K - e^{\overline{x}-c} \right) \left[ -Z^{(r)'}(c) + \frac{Z^{(r)}(c)}{W^{(r)}(c)}W^{(r)'}(c) \right] \\& -e^{\overline{x}}\left(1 - \lambda(c,r)\right) - \lambda(c,r)\Delta(c,r)K.
\end{align}
Further algebraic simplifications give
\begin{align}
&\frac{\partial}{\partial \overline{x}} \hat{V}(x, \overline{x}) \big|_{x = \overline{x}}
= \left( K - e^{\overline{x}-c} \right) \left[ -Z^{(r)'}(c) + \frac{Z^{(r)}(c)}{W^{(r)}(c)}W^{(r)'}(c) \right] \nonumber\\ &- e^{\overline{x}}\left(1 - \lambda(c,r)\right) - \frac{\sigma^2}{2}K\left( \lambda(c,r)W^{(r)'}(c) - W^{(r)''}(c)\right) \nonumber\\
	&= K\left(  \frac{Z^{(r)}(c)}{W^{(r)}(c)}W^{(r)'}(c) - rW^{(r)}(c) - \frac{\left(\sigma W^{(r')}(c)\right)^2}{2W^{(r)}(c)} + \frac{\sigma^2}{2}W^{(r)''}(c) \right) \nonumber\\
	& + e^{\overline{x}}\left( rW^{(r)}(c)e^{-c} - \frac{Z^{(r)}(c)}{W^{(r)}(c)}W^{(r)'}(c)e^{-c} -1 + \frac{W^{(r)'}(c)}{W^{(r)}(c)} \right). \nonumber
\end{align}

We will prove that both brackets are equal to zero. Note that $Z^{(r)}(c) = \frac{1}{1-\gamma}\left( e^{\gamma} c - \gamma e^c \right)$ and observe that:
\begin{align}\label{niewiem2}
	  &\frac{Z^{(r)}(c)}{W^{(r)}(c)}W^{(r)'}(c) - rW^{(r)}(c) - \frac{\left(\sigma W^{(r')}(c)\right)^2}{2W^{(r)}(c)} + \frac{\sigma^2}{2}W^{(r)''}(c)  \\
	&=  \frac{1}{1-\gamma}\left( e^{\gamma c} - \gamma e^c \right)\frac{e^{c} - \gamma e^{\gamma c}}{e^{c} - e^{\gamma c}} - \frac{r}{r + \frac{\sigma^2}{2}}\left( e^{c} - e^{\gamma c} \right) - \frac{\frac{\sigma^2}{2}}{r + \frac{\sigma^2}{2}}\left( \frac{(e^{c} - \gamma e^{\gamma c})^2}{e^{c} - e^{\gamma c}} \right) \nonumber\\
	&=  \frac{1}{1-\gamma}\left( e^{\gamma c} - \gamma e^c \right)\frac{e^{c} - \gamma e^{\gamma c}}{e^{c} - e^{\gamma c}} + \frac{\gamma}{1-\gamma}\left( e^{c} - e^{\gamma c} \right) - \frac{1}{1-\gamma}\left( \frac{(e^{c} - \gamma e^{\gamma c})^2}{e^{c} - e^{\gamma c}} \right)  \nonumber\\ &=  -\frac{1}{r + \frac{\sigma^2}{2}}(e^c - e^{\gamma c}) + \frac{(e^c - e^{\gamma c})^{-1}}{1-\gamma } \nonumber\\
	&\times \left[ e^{c(1+\gamma^2)} - \gamma(e^{2c} + e^{2\gamma c}) - e^{2c} - \gamma^2e^{2\gamma c} + 2\gamma e^{c(1+\gamma)} + e^{2c} + \gamma^2e^{2\gamma c} - (1+\gamma^2)e^{c(1+\gamma)} \right] \nonumber\\
	&= \frac{\gamma(e^c - e^{\gamma c})^{-1}}{1-\gamma } \left( e^{2c} + e^{2\gamma c} - 2e^{c(1+\gamma)} \right)  -\frac{1}{r + \frac{\sigma^2}{2}}(e^c - e^{\gamma c}) \nonumber\\
	&= (e^c - e^{\gamma c})\frac{1}{1-\frac{1}{\gamma}} - \frac{1}{1-\frac{1}{\gamma}}(e^c - e^{\gamma c}) = 0. \nonumber
\end{align}
Additionally,
\begin{align}\label{niewiem3}
	&rW^{(r)}(c)e^{-c} - \frac{Z^{(r)}(c)}{W^{(r)}(c)}W^{(r)'}(c)e^{-c} -1 + \frac{W^{(r)'}(c)}{W^{(r)}(c)} = \\
	& = \frac{1}{W^{(r)}(c)} \left[ r(W^{(r)}(c))^2e^{-c} - Z^{(r)}(c)W^{(r)'}(c)e^{-c} -W^{(r)}(c) + W^{(r)'}(c) \right] \nonumber\\
	&=\frac{C e^{-c}}{W^{(r)}(c)}\left[ rC(e^c - e^{\gamma c})^2 - \frac{1}{1-\gamma}(e^{\gamma c} - \gamma e^c)(e^c - \gamma e^{\gamma c}) - e^c(e^c - e^{\gamma c} - e^c + \gamma e^{\gamma c} )  \right] \nonumber \\
	& = \frac{C e^{-c}}{W^{(r)}(c)} \left[ \frac{1}{1-\gamma}\left( \gamma e^{2c} + \gamma e^{2\gamma c} - e^{c(1+\gamma)} - \gamma(e^{2c} + e^{2\gamma c} - 2e^{c(1+\gamma)}) \right) + (1-\gamma)e^{c(1+\gamma)} \right] = 0. \nonumber
\end{align}

Combining (\ref{niewiem1}), (\ref{niewiem2}) and (\ref{niewiem3}) we get $	\frac{\partial}{\partial \overline{x}} \hat{V}(x, \overline{x}) \big|_{x = \overline{x}} = 0$.

This completes the proof.

\section{Numerical analysis}\label{sec:num}

In this section, we explore several properties of the options capped by drawdown. First, Figure \ref{Smooth_paste} illustrates the smooth-paste condition defined by equations (\ref{HJB6}) and (\ref{HJB7}).
The parameters are selected such that $\overline{x} < a^*+c$, which allows us to observe the occurrence of a drawdown event.

Next, Figure \ref{Price} shows how the option price depends on the initial values $X_0=x$ and $\overline{X}_0=\overline{x}$.
Notably, when $\overline{x}$ exceeds $\log(K) + c$, the option becomes worthless unless $x < \log(K)$. This is because, at such a level of $\overline{x}$, the stock price cannot reach the strike price before the option is terminated by the drawdown trigger. Of course, one can hypothetically consider a pair $(x, \overline{x})$ such that the difference between them is greater than $c$. In that case, the option is immediately exercised and its price is equal to the immediate payoff. For $\overline{x} < \log(K) + c$, the plot shows a smooth paste of the price function to the payoff function.

Finally, we perform a sensitivity analysis of both the stopping barrier and the option price with respect to the volatility $\sigma$ and the risk-free rate $r$.
In Figure \ref{Sensis_a}, we analyse the optimal barrier $e^{a^*}$ of the underlying asset price process. It is evident that the barrier increases with higher $r$ and lower $\sigma$, indicating that an increase in the drift parameter of $X_t$ leads to an upward shift in the barrier.

Similarly, Figure \ref{Sensis_v} presents the sensitivities of the option price. In contrast to the previous chart, we can see that the function increases with a higher interest rate and lower volatility. This behaviour is intuitive: A greater $\sigma$ leads to higher uncertainty for the seller, which has the effect on the risk premium. On the other hand, the decrease in $r$ translates to discount factors closer to 1, and consequently to the higher present value of future payouts.

The relation between Figures \ref{Sensis_a} and \ref{Sensis_v} is also consistent with theoretical expectations. The highest option price coincides with the lowest optimal stopping barrier, since reaching a lower asset price can lead to a greater payout than if the contract were exercised earlier.

\begin{figure}
	\includegraphics[width=\textwidth]{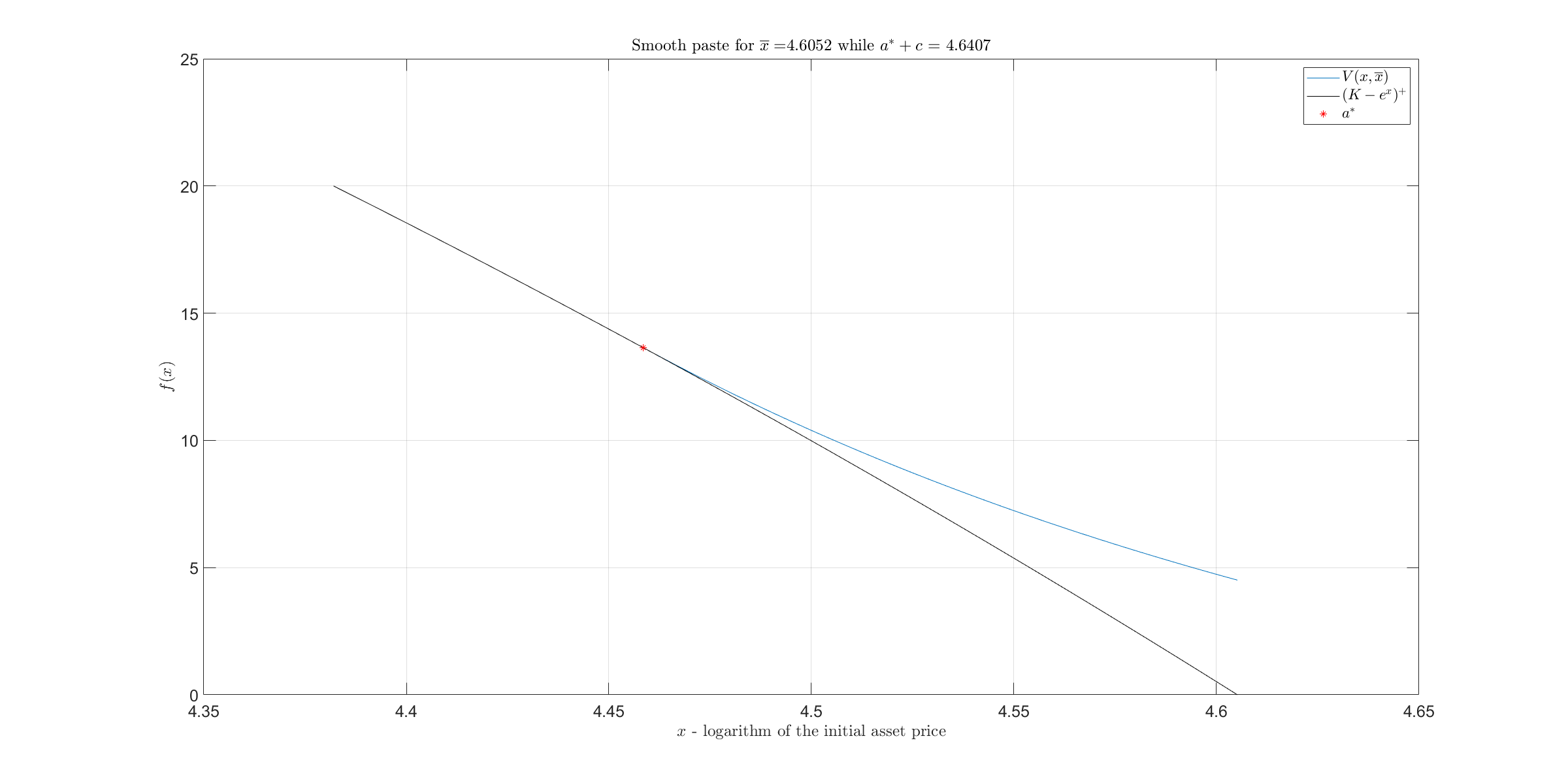}
	\caption{Smooth paste of the option price $V$ and the option payoff. Parameters of the model: $r = 0.1,\ \sigma = 0.2,\ e^c = 1.2,\ e^{\overline{x}} = K = 100$.}
	\label{Smooth_paste}
\end{figure}

\begin{figure}
	\includegraphics[width=\textwidth]{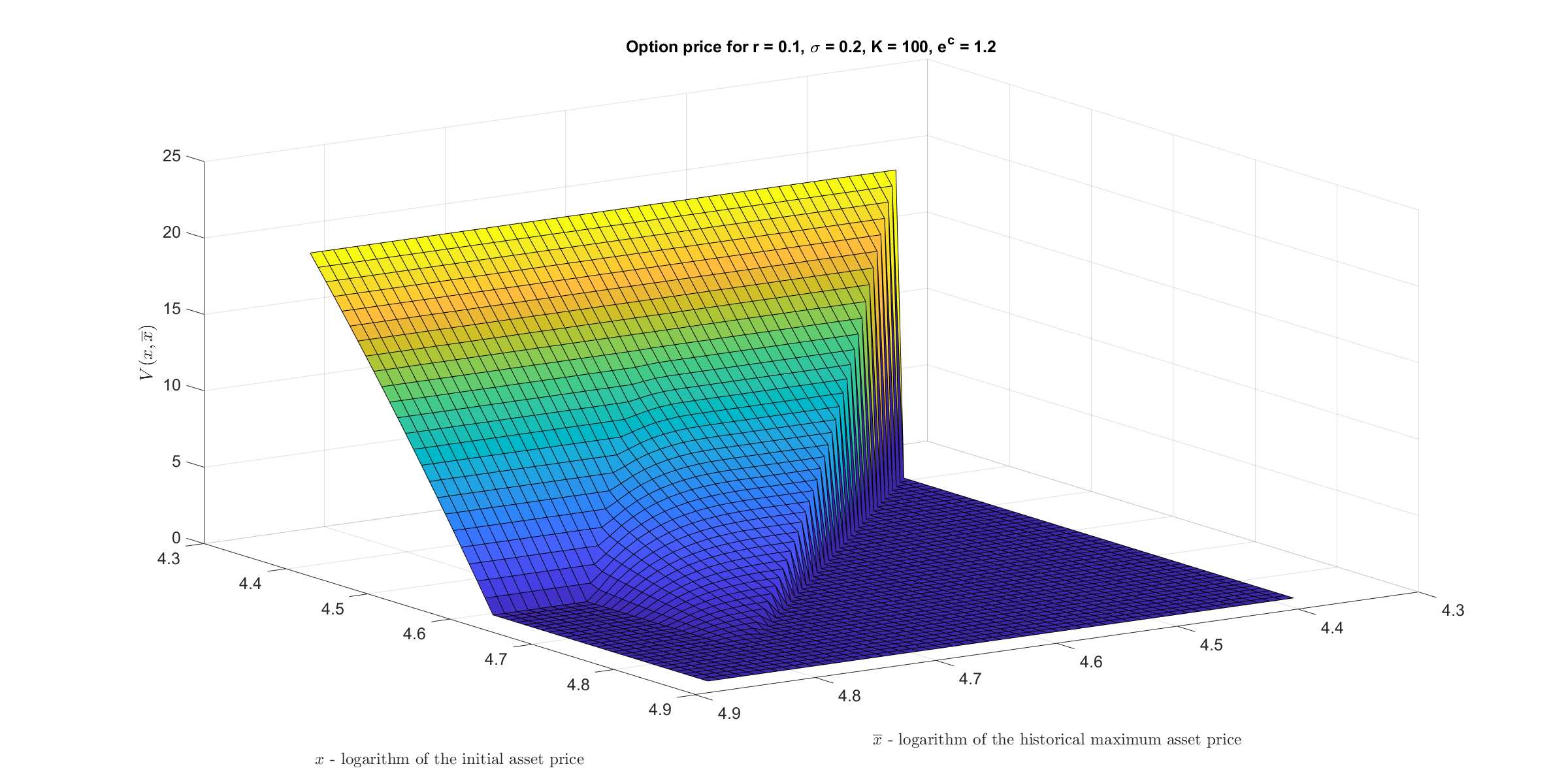}
	\caption{Option price depending on $x$ and $\overline{x}$. Note that the function is not defined for $\overline{x} < x$.}
	\label{Price}
\end{figure}

\begin{figure}
	\includegraphics[width=\textwidth]{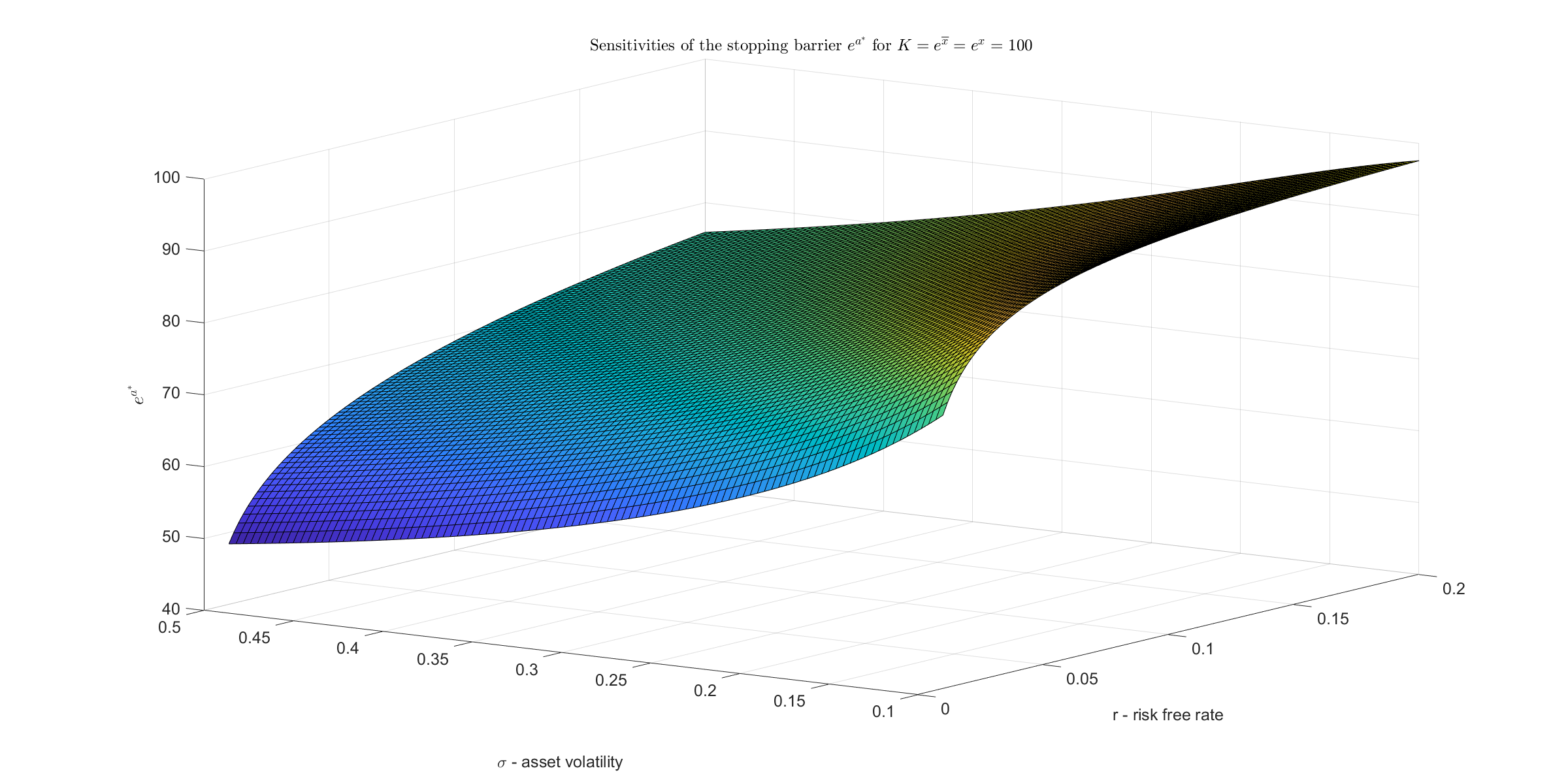}
	\caption{Stopping barrier $e^{a^*}$ of underlying asset price process $S_t$ depending on the risk-free rate $r$ and the volatility $\sigma$.}
	\label{Sensis_a}
\end{figure}

\begin{figure}
	\includegraphics[width=\textwidth]{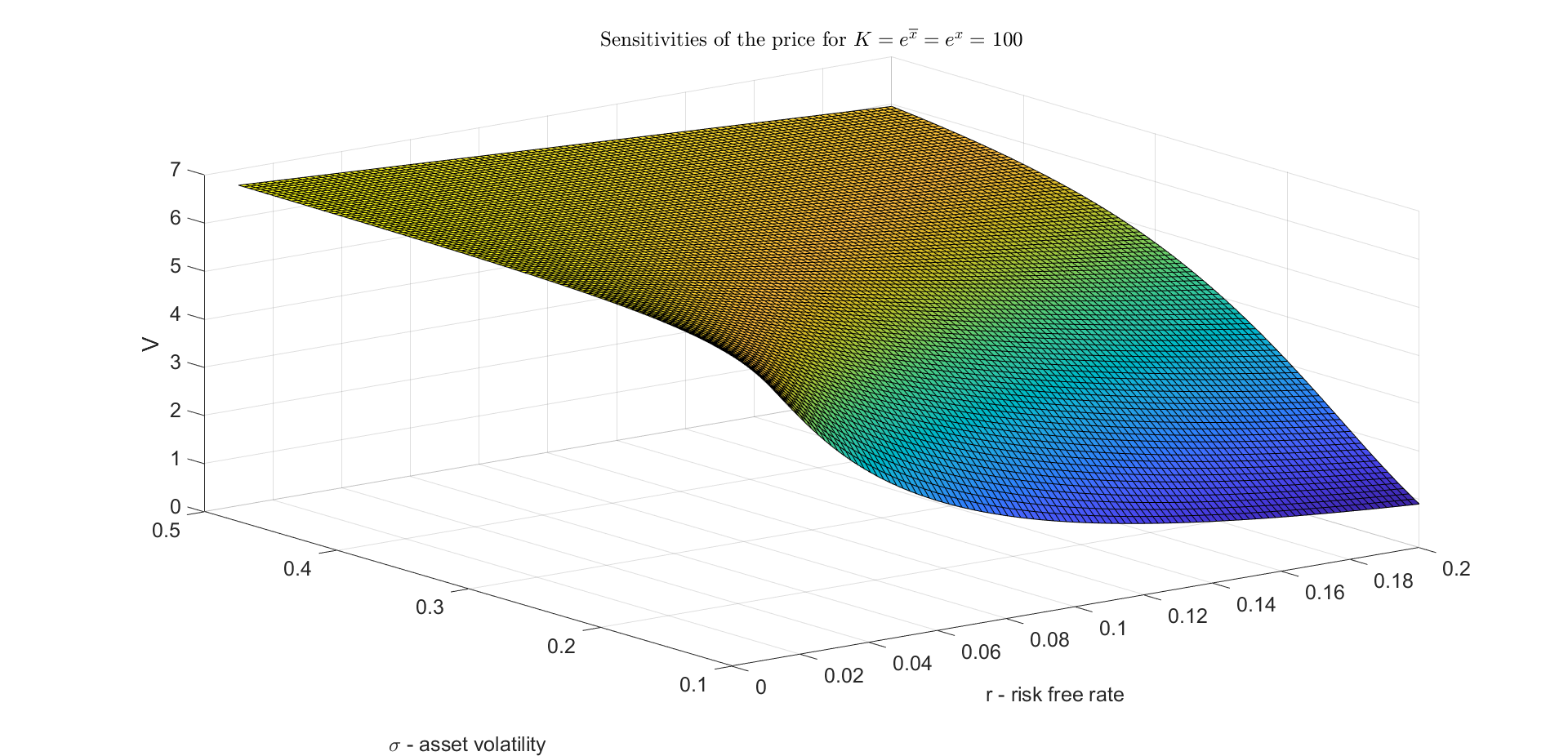}
	\caption{Sensitivities of the option price depending on the risk-free rate $r$ and the volatility $\sigma$.}
	\label{Sensis_v}
\end{figure}

{\bf Data availability statement}\\
This paper has no associated data.

\newpage

\end{document}